\newtheorem{theorem}{Theorem}
\newtheorem{proposition}{Proposition}
\theoremstyle{definition}
\newtheorem{example}{Example}
\newcommand{\R}{\mathbb R} 
\newcommand{\C}{\mathbb C} 
\newcommand{\hi}{\mathcal{H}} 
\newcommand{\ki}{\mathcal{K}} 
\newcommand{\lh}{\mathcal{L(H)}} 
\newcommand{\lk}{\mathcal{L(K)}} 
\newcommand{\trh}{\mathcal{T(H)}} 
\newcommand{\trhk}{\mathcal{T}(\mathcal{H}\otimes \mathcal{K})} 
\newcommand{\sh}{\mathcal{S(H)}} 
\newcommand{\sk}{\mathcal{S(K)}} 
\newcommand{\ph}{\mathcal{P(H)}} 
\newcommand{\pk}{\mathcal{P(K)}} 
\newcommand{\ip}[2]{\left\langle\,#1\,|\,#2\,\right\rangle} 
\newcommand{\ket}[1]{|#1\rangle} 
\newcommand{\bra}[1]{\langle#1|} 
\newcommand{\tr}[1]{\textrm{tr}\left[#1\right]} 
\newcommand{\id}{\mathbbm{1}} 
\newcommand{\mc}[1]{\mathcal{#1}} 
\newcommand{\Ao}{\mathsf{A}}
\newcommand{\Eo}{\mathsf{E}}
\newcommand{\Po}{\mathsf{P}}
\newcommand{\Qo}{\mathsf{Q}}
\newcommand{\So}{\mathsf{S}}
\newcommand{\Zo}{\mathsf{Z}}
\begin{document}
\title[]{WAY beyond conservation laws$^\star$}
\author{Mikko Tukiainen}
\address{Turku Centre for Quantum Physics, Department of Physics and Astronomy, University of Turku, Finland}
\email{mikko.tukiainen@utu.fi}

\begin{abstract} 
The ability to measure every quantum observable is ensured by a fundamental result in quantum measurement theory. Nevertheless, additive conservation laws associated with physical symmetries, such as the angular momentum conservation, may lead to restrictions on the measurability of the observables. Such limitations are imposed by the theorem of Wigner, Araki and Yanase (WAY). In this paper a new formulation of the WAY-theorem is presented rephrasing the measurability limitations in terms of quantum incompatibility. This broader mathematical basis enables us to both capture and generalise the WAY-theorem by allowing to drop the assumptions of additivity and even conservation of the involved quantities. Moreover, we extend the WAY-theorem to the general level of positive operator valued measures.

$^\star$ Dedicated to my beloved daughter Minttu; may your life be full of joy and wonder.
\end{abstract}
\maketitle

\section{Introduction}

Measurability of physical quantities is an integral part of any scientific theory. Indeed, the whole endeavour of understanding natural phenomena relies crucially on the ability to assign values to the physical properties of the system of interest by means of performing measurements. As physical processes, measurements are subjected to and constrained by the laws of physics. In particular, it is known that quantum theory together with certain conservation laws can set limitations to the measurability of the quantum observables. More specifically, an observable which does not commute with an additive conserved quantity does not admit a repeatable and perfectly precise measurement -- this limit is known as the Wigner-Araki-Yanase (WAY) theorem \cite{Wigner52, Araki60}.
 
In recent investigations the original theorem of WAY has been generalised to more widely applicable contexts. In particular, ways to omit the assumptions of the repeatability of the measurement \cite{Miyadera06, Loveridge10, QTM} and the additivity of the conserved quantity \cite{Kimura08} have been reported. Moreover, different quantitative generalisations of the WAY-theorem which relax the assumption of perfect precision have been studied in Refs.\,\cite{Ozawa02, Ozawa03, Miyadera06, Kimura08, Loveridge10, Busch11}. With all its extensions the current form of the WAY-theorem covers a large class of physically important scenarios and consequently has applications, not solely in quantum measurement theory, but also in the fields of quantum information processing and quantum control. For example, limitations on the realisability of quantum logic gates due to the WAY-theorem have been discussed in Refs.\,\cite{Ozawa022, Ozawa03, Karasawa07, Karasawa09}. 

Even though the WAY-theorem has certainly been extended from the days of its inception, its full scope is still unknown and the formalism of WAY is not particularly intuitive. Our first main result is to introduce a new extension of the WAY-theorem, in which the assumptions of repeatability of the measurement and, not only the additivity, but even the central assumption of conservation of a quantity commuting with the measured observable can be omitted. Formally, our result states that, whenever a quantity commutes with the evolved pointer of the apparatus, a part of it also necessarily commutes with the measured observable. In other words, in our explanation the WAY-theorem can be understood as a consequence of quantum {\it compatibility} \cite{Heinosaari14, Heinosaari16} of a given quantity with the evolved pointer partially inherited by the measured observable. We believe that the intuition behind this formalism is conceptually clearer than in the preceding formulations listed above. We also present examples that demonstrate the limitations of measurability, even if the assumptions of the original WAY-theorem are violated.

Strictly speaking, the restrictions posed by the WAY-theorem affect only the special class of quantum observables associated with the normalised projection valued measures. In vague terms, these {\it sharp} observables correspond to ideally precise measurements and the limitations of WAY-theorem may, in principle, be circumvented by introducing an arbitrarily small amount of inaccuracy in the measured observable: such imprecision can be described by associating the measured observable with a normalised positive operator valued measure. Since noise is inevitably present in every real experiment, from a practical point of view the measured observables are generally {\it unsharp} and it may seem that the WAY-theorem exists only as a theoretical phenomenon. 

It is our second main result to show that quantitative versions of the WAY-theorem persist also in the level of the unsharp observables. In particular, we reveal a natural relation, in which the sharpness of the measured observable and the amount of compatibility of the evolved pointer with a given (additive conserved) quantity govern the WAY-type limitations. These results are then applied to expose restrictions in quantum programming.

\section{WAY limitations}

We begin by outlining the basic concepts of quantum measurement theory relevant to our investigation. Let $\hi$ be a complex separable, possibly infinite dimensional, Hilbert space associated to a quantum system and denote by $\lh$, $\ph$ and $\trh$ the set of bounded operators, projections and trace-class operators on $\hi$, respectively. The identity operator in $\lh$ is denoted by $\id$. The properties of a quantum system are encoded in a quantum state $\varrho$, a positive operator in $\trh$ with $\tr{\varrho}=1$. Quantum states on $\hi$ comprise a convex set that is denoted by $\sh$. The extremal elements of $\sh$ are called pure and any such state is of the form $\ket \varphi \bra \varphi$ for some unit vector $\varphi \in \hi$; for this reason, we can call any unit vector a quantum state without risk of confusion.

Let $\Omega$ be a set and $\Sigma$ a $\sigma$-algebra of subsets of $\Omega$. We associate {\it quantum observables} with normalised positive operator valued measures (POVMs) $\Eo:\Sigma \to \lh, X \mapsto \Eo(X)$. The number $p^\Eo_\varrho(X) = \tr{\Eo(X) \, \varrho}$ is interpreted as the probability that a measurement of $\Eo$ performed on $\varrho\in\sh$ leads to a result in $X\in \Sigma$. We call the operators $\Eo(X)$ in the range of an observable {\it effects}. An observable whose all effects are multiples of the identity, that is $\Eo(X) = p(X)\, \id$ for some probability measure $p:\Sigma \to [0,1]$, is called {\it trivial}. The normalised projection valued measures (PVMs) $\Ao:\Sigma \to \ph$ are called {\it sharp observables}. If $\Omega=\{x_1,\,x_2,...\}$ with $n(\leq\infty)$ elements and $\Sigma= 2^{\Omega}$ is the corresponding outcome space of an observable $\Eo$, we say that $\Eo$ is a {\it discrete} ($n$-valued) observable. In particular, for any $\vec{m} = (m_x,m_y,m_z) \in \R^3$, $|| \vec{m}||\leq 1$, we define the discrete 2-valued spin-observables $\So_{\vec{m}}:2^{\{+,-\}} \to \mc L(\C^2)$ via $\So_{\vec{m}}(\pm) = \frac{1}{2}\big( \id \pm \vec{m} \cdot \vec{\sigma})$, where $\vec{m} \cdot \vec{\sigma} = \sum_{i=x,y,z} m_i \sigma_i$. Here, $\sigma_x, \sigma_y$ and $\sigma_z$ are the Pauli spin-matrices
\begin{eqnarray}
\sigma_x = \left(\begin{array}{cc}
0 & 1 \\
1 & 0
\end{array}\right), \ 
\sigma_y = \left(\begin{array}{cc}
0 & -i \\
i & 0
\end{array}\right), \
\sigma_z = \left(\begin{array}{cc}
1 & 0 \\
0 & -1
\end{array}\right). \quad
\end{eqnarray} 
An observable $\So_{\vec{m}}$ is sharp exactly when $|| \vec{m}||=1$. We will use the notations $\hat x=(1,0,0),$ $\hat y=(0,1,0)$ and $\hat z=(0,0,1)$. 

The quantum description of a {\it measurement} is mathematically encoded in a 4-tuple $\langle \ki, \Zo, \mc V, \xi \rangle$, where $\ki$ is the Hilbert space associated to the measurement apparatus, $\Zo:\Sigma \to \mc L(\ki)$ is the pointer observable, the completely positive trace preserving (CPTP) linear map $\mc V: \trhk\to\trhk$ describes the measurement coupling and $\xi \in \mc S(\ki)$ is the initial state of the apparatus. Under the measurement process the initially separable compound state of system and apparatus $\varrho \otimes \xi$ evolves to $\mc V(\varrho\otimes \xi)$ and after the evolution the measurement outcome is read from the pointer scale. The observable $\Eo:\Sigma\to\lh$ measured in $\langle \ki, \Zo, \mc V, \xi \rangle$ is reproduced by the formula $p^\Eo_\varrho(X) = \tr{\id \otimes \Zo(X)\, \mc V(\varrho \otimes \xi)}$ required to hold for all $X\in\Sigma$ and $\varrho \in \sh$  \cite{QTM}. 

In this study we will focus on measurements of a particular form, viz. assuming that $\Zo: \Sigma \to \pk$ is sharp, $\mc V$ is conjugation with a unitary operator $U$ on $\hi \otimes \ki$ and $\xi = \ket \phi \bra \phi$ for some unit vector $\phi\in\ki$: such {\it normal} measurements \cite{QTM} we write as $\langle \ki, \Zo, U, \phi \rangle$. From the physical point of view, normal measurements describe ideally functioning measuring devices, where the system-apparatus-composite forms a closed quantum system, the detectors work with perfect accuracy and the apparatus is initially prepared in a state of maximal information. The observable $\Eo$ measured in a normal measurement attains a simple form
\begin{eqnarray}\label{eq:mntobs}
\Eo(X) = V_\phi^* U^*\big( \id \otimes \Zo(X)\big) U V_\phi, \qquad X \in \Sigma,
\end{eqnarray}
where the isometry $V_\phi: \hi \to \hi\otimes \ki$ is defined via $V_\phi(\varphi) = \varphi\otimes \phi$ for all $\varphi \in \hi$. The measured observable is sharp exactly when $[U^*\big( \id \otimes \Zo(X)\big) U, V_\phi V_\phi^*]=0$ for all $X \in \Sigma$ \cite{Lahti04}; this result will be used frequently during the present work. 

A normal measurement $\langle \ki, \Zo, U, \phi \rangle$ is {\it repeatable} if any recorded outcome of the measurement does not change upon its immediate repetition, or equivalently  $\Eo(X) =  V_\phi^* U^* \big( \Eo(X) \otimes \Zo(X) \big) U V_\phi$, for all $X \in \Sigma$. Not all measurements are repeatable and furthermore not all observables even admit repeatable measurements. Indeed, only discrete observables whose all non-zero effects have eigenvalue $1$ can be realized in a repeatable normal measurement \cite{Ozawa84}.

A fundamental result in quantum measurement theory ensures that every quantum observable can be realised in a measurement, even in a normal one \cite{ Ozawa84}. As physical processes, however, measurements are subjected to laws of quantum physics that can impose restrictions on the measurability of observables. One such limitation, first pointed out in measurements of spin-$\frac 12$-systems by Wigner \cite{Wigner52} and later stated in more general setting by Araki and Yanase \cite{Araki60}, is due to conservation laws for additive quantities that do not commute with the observable to be measured. To make this more exact, we call a bounded selfadjoint operator $L \in \mc L(\hi \otimes \ki)$ a {\it conserved quantity} (with respect to the measurement coupling $U$), if $\tr{L \, \varrho} = \tr{L \, U\varrho U^*}$ for all $\varrho \in \mc S(\hi\otimes \ki)$ or equivalently $[L, U]=0$. If furthermore $L=L_1\otimes \id + \id \otimes L_2$, where $L_1$ and $L_2$ are selfadjoint operators in $\lh$ and $\lk$, respectively, we say that $L$ is an {\it additive conserved quantity}. The theorem of Wigner, Araki and Yanase (WAY) then states that, if $L=L_1\otimes \id + \id \otimes L_2$ is an additive conserved quantity w.r.t. the coupling $U$ of a repeatable normal measurement $\langle \ki, \Zo, U, \phi \rangle$ of a sharp (discrete) observable $\Ao$, then necessarily $[\Ao(X), L_1] =0$ for all $X\in\Sigma$. From this point onwards we will use the shortened notation $[\Eo, L] = 0$ whenever $[\Eo(X),L] =0$ for all $X \in \Sigma$.

Many realistic measurements are not repeatable. However, to ensure a stable record of the measurement, it is often assumed that the pointer reading is subjected to a repeatable measurement. When such is the case, the above WAY-theorem persists at the pointer level implying $[\Zo, L_2]=0$; adapting the terminology used in the literature \cite{Ozawa02, Loveridge10} we will call this commutation the {\it Yanase condition}. Importantly, it has been shown in \cite{QTM, Loveridge10} that the same conclusion of the WAY-theorem can be drawn if the assumption of repeatability is replaced by the Yanase condition. As a summary of the above, we present the following theorem; see Ref.\,\cite{Loveridge10} for the proof.
\begin{theorem}[WAY-theorem]
Let $\langle \ki, \Zo, U, \phi \rangle$ be a normal measurement of a sharp observable $\Ao$ and let $L_1\in\lh$ and $L_2\in\lk$ be bounded selfadjoint operators such that $L=L_1\otimes \id + \id \otimes L_2\in\mc L(\hi\otimes\ki)$ is an additive conserved quantity. Assume that
$\langle \ki, \Zo, U, \phi \rangle$ is repeatable or satisfies the Yanase condition. Then
$[\Ao,L_1] = 0$.
\end{theorem}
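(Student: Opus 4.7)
The plan is to lift the commutator $[\Ao(X), L_1]$ from $\hi$ to the compound space $\hi\otimes\ki$, where sharpness of $\Ao$ yields a clean intertwining between $\Ao$ and the Heisenberg-evolved pointer $B(X) := U^*(\id\otimes\Zo(X))U$, and then use conservation to trade the $L_1$-commutator for an $L_2$-commutator that collapses under the Yanase condition (or, in the repeatability case, under orthogonality of pointer subspaces).

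For the Yanase variant, the sharpness criterion $[B(X), V_\phi V_\phi^*]=0$ recalled in the excerpt, together with $V_\phi^*V_\phi=\id$, is readily seen to be equivalent to the twin intertwining relations $B(X)V_\phi = V_\phi\Ao(X)$ and $V_\phi^*B(X) = \Ao(X)V_\phi^*$. Combined with $(L_1\otimes\id)V_\phi = V_\phi L_1$, which is immediate from $V_\phi\varphi = \varphi\otimes\phi$, one rewrites
\begin{equation*}
[\Ao(X), L_1] = V_\phi^*\bigl[B(X),\, L_1\otimes\id\bigr]V_\phi.
\end{equation*}
Conservation $[L, U]=0$ yields $[L, B(X)] = U^*\bigl(\id\otimes[\Zo(X), L_2]\bigr)U$, which vanishes under Yanase; therefore $[B(X), L_1\otimes\id] = -[B(X), \id\otimes L_2]$. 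The intertwining then reduces the right-hand side to $-\bigl[\Ao(X),\, V_\phi^*(\id\otimes L_2)V_\phi\bigr]$, and since $V_\phi^*(\id\otimes L_2)V_\phi = \langle\phi|L_2\phi\rangle\,\id$ is a scalar, this vanishes.

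For the repeatability case (where $\Ao$ must be discrete with projection-valued effects), the plan is instead to derive the sharper intertwining $UV_\phi\Ao(X) = (\Ao(X)\otimes\Zo(X))UV_\phi$ — a short verification using repeatability, the identity $\Ao(X)^3 = \Ao(X)$, and $\Ao(X)\Ao(X^c) = 0$ to show that $UV_\phi\Ao(X)\varphi$ sits in $\mathrm{ran}(\Ao(X)\otimes\Zo(X))$ while $UV_\phi\Ao(X^c)\varphi$ sits in its orthogonal complement — and then to compute $\langle\varphi_x|L_1\varphi_y\rangle$ for $\Ao$-eigenvectors with distinct eigenvalues $x\neq y$. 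Conservation applied to $\langle U(\varphi_x\otimes\phi)|L|U(\varphi_y\otimes\phi)\rangle$ relocates the matrix element to the two orthogonal subspaces $\mathrm{ran}(\Ao(\{x\})\otimes\Zo(\{x\}))$ and $\mathrm{ran}(\Ao(\{y\})\otimes\Zo(\{y\}))$; the two cross-terms in $L_1\otimes\id + \id\otimes L_2$ both vanish because $\Zo(\{x\})\Zo(\{y\}) = 0$ kills the $L_1$-piece and $\Ao(\{x\})\Ao(\{y\}) = 0$ kills the $L_2$-piece. I expect this bookkeeping to be the main obstacle: in the Yanase case the algebra is essentially two lines once the intertwining is spotted, whereas in the repeatability case one must juggle the two orthogonality relations — and the absence of Yanase — with care.
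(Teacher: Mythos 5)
Your proposal is correct. Note first that the paper itself does not prove this theorem in-line: it cites Ref.\,[Loveridge10] for the proof, and what it does prove is Proposition \ref{prop:way}, whose second claim is precisely the Yanase branch of the theorem. Your Yanase-case argument is essentially identical to that proof: the intertwining relations $U^*(\id\otimes\Zo(X))U\,V_\phi = V_\phi\Ao(X)$ and $V_\phi^*\,U^*(\id\otimes\Zo(X))U = \Ao(X)V_\phi^*$ that you extract from $[U^*(\id\otimes\Zo(X))U, V_\phi V_\phi^*]=0$ are exactly the algebraic content of the paper's four-line computation, and the reduction of the Yanase condition plus conservation to the weak Yanase condition $[U^*(\id\otimes\Zo)U,L]=0$, followed by $V_\phi^*(\id\otimes L_2)V_\phi=\ip{\phi}{L_2\phi}\id$, matches the paper's closing remark in that proof. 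The repeatability branch is where you genuinely diverge, simply because the paper supplies no argument for it. Your plan there is sound and is the classical Araki--Yanase calculation: repeatability plus probability reproducibility force $UV_\phi\Ao(\{x\})=(\Ao(\{x\})\otimes\Zo(\{x\}))UV_\phi$ (the norm argument $\|P\Psi\|=\|\Psi\|\Rightarrow P\Psi=\Psi$ for the projections $\id\otimes\Zo(\{x\})$ and $\Ao(\{x\})\otimes\Zo(\{x\})$ does the work, so the invocation of $\Ao(X)^3=\Ao(X)$ is a harmless detour), and then conservation moves $\ip{\varphi_x}{L_1\varphi_y}$ onto mutually orthogonal ranges where $\Zo(\{x\})\Zo(\{y\})=0$ kills the $L_1\otimes\id$ term and $\Ao(\{x\})\Ao(\{y\})=0$ kills the $\id\otimes L_2$ term; this yields $\Ao(\{x\})L_1\Ao(\{y\})=0$ for $x\neq y$ and hence $[\Ao,L_1]=0$. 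What your route buys is a self-contained proof of both branches; what the paper's route buys is that the Yanase branch drops out of a strictly more general statement (Prop.\,\ref{prop:way}) requiring neither additivity nor conservation.
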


An easy check confirms that under the conservation $[L, U]=0$ of an additive quantity $L=L_1\otimes \id + \id \otimes L_2$ the Yanase condition $[\Zo, L_2]=0$ is equivalent to $[U^*\big( \id \otimes \Zo\big) U, L]=0$. In the following, we shall see that the {\it weak Yanase condition} $[U^*\big( \id \otimes \Zo\big) U, L]=0$ may be used to generalise the WAY-theorem. 

\begin{proposition}\label{prop:way} Let $\langle \ki, \Zo, U, \phi \rangle$ be a normal measurement of a sharp observable $\Ao$ and let $L\in\mc L(\hi\otimes\ki)$. If $[U^*\big( \id \otimes \Zo \big) U, L]=0$, then $[\Ao, V_\phi^* L V_\phi]=0$. In particular, if $L_1\in\lh$ and $L_2\in\lk$ are bounded selfadjoint operators such that $L=L_1\otimes \id + \id \otimes L_2$ is an additive conserved quantity and $[\Zo, L_2]=0$, then $[\Ao, L_1] =0$.
\end{proposition}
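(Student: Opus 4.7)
The plan is to unpack both hypotheses with the formula $\Ao(X) = V_\phi^* \tilde \Zo(X) V_\phi$ from \eqref{eq:mntobs}, where I write $\tilde \Zo(X) := U^*(\id\otimes\Zo(X))U$ for brevity. Sharpness of $\Ao$ supplies the relation $[\tilde \Zo(X), V_\phi V_\phi^*]=0$ cited after \eqref{eq:mntobs}, and the hypothesis of the proposition gives $[\tilde \Zo(X), L]=0$. With these two commutation relations in hand, the first claim should fall out by a direct algebraic manipulation.

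First I would write
\begin{equation*}
[\Ao(X), V_\phi^* L V_\phi] = V_\phi^* \tilde \Zo(X) V_\phi V_\phi^* L V_\phi - V_\phi^* L V_\phi V_\phi^* \tilde \Zo(X) V_\phi .
\end{equation*}
Next, in each term I push $V_\phi V_\phi^*$ past $\tilde \Zo(X)$ using $[\tilde \Zo(X), V_\phi V_\phi^*]=0$ and collapse $V_\phi V_\phi^* V_\phi = V_\phi$ (which follows from the isometry identity $V_\phi^* V_\phi = \id_\hi$). Both terms then reduce, respectively, to $V_\phi^* \tilde \Zo(X) L V_\phi$ and $V_\phi^* L \tilde \Zo(X) V_\phi$, whose difference vanishes by $[\tilde \Zo(X), L]=0$.

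For the ``in particular'' part, I would deduce the hypothesis of the first part from the stated assumptions. Conservation $[L,U]=0$ implies $U^*LU=L$, so $[\tilde \Zo(X), L] = U^*[\id\otimes \Zo(X), L]U$. Splitting $L = L_1\otimes \id + \id\otimes L_2$ gives $[\id\otimes \Zo(X), L] = \id\otimes [\Zo(X), L_2]$, which vanishes by the Yanase condition. Finally, I would compute $V_\phi^* L V_\phi$ directly: $V_\phi^*(L_1\otimes \id)V_\phi = L_1$ while $V_\phi^*(\id\otimes L_2)V_\phi = \langle \phi|L_2\phi\rangle\, \id$, so $V_\phi^* L V_\phi = L_1 + \langle \phi|L_2\phi\rangle\,\id$, and commuting this with $\Ao$ is equivalent to $[\Ao, L_1]=0$.

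I do not expect any real obstacle here; the main subtlety is just keeping the order of operators straight when sliding $\tilde \Zo$ past both $V_\phi V_\phi^*$ and $L$, and being careful that $V_\phi^* V_\phi = \id_\hi$ whereas $V_\phi V_\phi^*$ is only a (non-trivial) projection on $\hi\otimes\ki$. The conceptual content is entirely carried by the sharpness identity $[\tilde \Zo, V_\phi V_\phi^*]=0$, which permits one to ``lift'' the commutation of $L$ with $\tilde \Zo$ on the full space down to a commutation on the system factor.
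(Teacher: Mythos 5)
Your proof is correct and follows essentially the same route as the paper: the sharpness identity $[\tilde\Zo(X),V_\phi V_\phi^*]=0$ together with $V_\phi^*V_\phi=\id$ and the weak Yanase condition yield the commutation, and the ``in particular'' part reduces to computing $V_\phi^*LV_\phi=L_1+\ip{\phi}{L_2\phi}\id$. The only cosmetic difference is that you expand the commutator directly while the paper writes the same manipulations as a chain of equalities from $\Ao(X)V_\phi^*LV_\phi$ to $V_\phi^*LV_\phi\,\Ao(X)$.
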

\begin{proof}
We first recall that $V_\phi$ is an isometric operator, that is $V_\phi^*V_\phi= \id$. Therefore, since $\Ao$ is assumed to be sharp, we have 
\begin{eqnarray}
\Ao(X) \, V_\phi^* L V_\phi &=&  V_\phi^* U^*\big( \id \otimes \Zo(X)\big) U V_\phi \, V_\phi^* L V_\phi \nonumber \\
&=& V_\phi^* U^*\big( \id \otimes \Zo(X)\big) U L V_\phi \nonumber \\
&=& V_\phi^* L V_\phi \,  V_\phi^* U^*\big( \id \otimes \Zo(X)\big) U V_\phi \nonumber \\
&=&  V_\phi^* L V_\phi \, \Ao(X).
\end{eqnarray}
For the second claim, we notice that $ V_\phi^* L V_\phi = L_1 + \ip{\phi}{L_2 \phi} \id$ whenever $L=L_1\otimes \id + \id \otimes L_2$, and that the assumption of $L_2$ being bounded ensures $\ip{\phi}{L_2 \phi}<\infty$.
\end{proof}

The above result establishes the generalised WAY-type limitations that hold for (continuous) observables without the necessity of $L$ being additive, or even conserved. For instance, let us consider a {\it multiplicative} selfadjoint quantity $L=L_1 \otimes L_2\in\mc L(\hi\otimes\ki)$ for which $\ip{\phi}{L_2 \phi} \neq 0$, e.g. $L_2$ is invertible. Supposing the assumptions of Prop.\,\ref{prop:way} hold, the condition $[U^*\big( \id \otimes \Zo \big) U, L]=0$ then implies $[\Ao, L_1] = 0$; a similar result was previously found for multiplicative conserved quantities in \cite{Kimura08}. 

\begin{figure}[t!]
\includegraphics[width=0.5\textwidth]{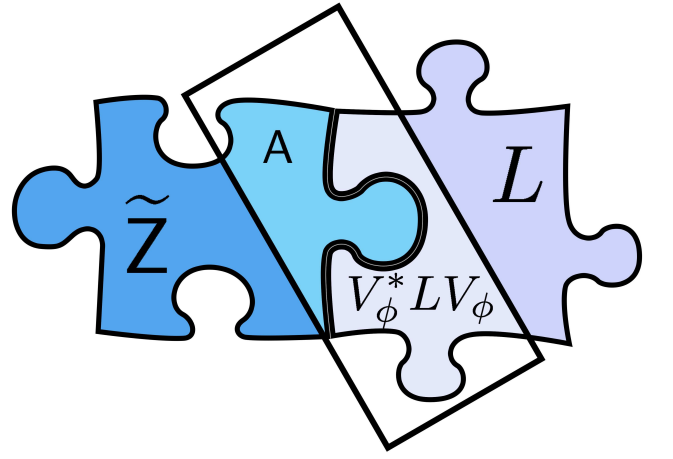}
\caption{Our formalism (Prop.\,\ref{prop:way}) gives the WAY-theorem the following interpretation: in a normal measurement $\langle \ki, \Zo, U, \phi \rangle$ of a sharp observable $\Ao$, the compatibility of the evolved pointer of the measurement apparatus $\widetilde \Zo = U^*\big( \id \otimes \Zo \big) U$ with a quantity $L\in\mc L(\hi\otimes \ki)$ implies the compatibility of $\Ao$ with $V_\phi^* L V_\phi$. To this end, the quantity $L$ does not need to be additive nor conserved.}\label{fig:waycompatibility}
\end{figure}

In an informal manner of speaking, two quantum devices are said to be {\it compatible} if there exists a measurement that is capable of realizing both the devices as its parts; for precise definitions of the terms we refer the reader to Refs.\,\cite{Heinosaari14, Heinosaari16}. For example, two observables are compatible if and only if they are jointly measurable. Accordingly, the compatibility of the (Heisenberg-) evolved pointer $U^*\big( \id \otimes \Zo \big) U$ with a selfadjoint quantity $L\in\mc L(\hi\otimes\ki)$ is equivalent to their commutativity $[U^*\big( \id \otimes \Zo \big) U, L]=0$. This implies that Prop.\,\ref{prop:way}, and therefore the WAY-theorem, may be understood as a consequence of compatibility of the evolved pointer with $L$ inherited by the measured observable; see Fig.\,\ref{fig:waycompatibility}. 

\begin{example}\label{ex:compchannel} Any operator of the form $L= U^* \big( B \otimes \id \big) U$, $B \in \lh$, commutes with $U^*\big( \id \otimes \Zo\big) U$. Assuming that the measured observable $\Ao$ is sharp, Prop.\,\ref{prop:way} implies $[ \Ao, V_\phi^* U^*\big( B \otimes \id \big) U V_\phi]=0$. We note that $\mc E^*(B) := V_\phi^* U^*\big( B \otimes \id \big) U V_\phi$ defines the Heisenberg channel, a completely positive unital linear map $\lh \to \lh$, associated to the measurement $\langle \ki, \Zo, U, \phi \rangle$. The commutativity $[ \Ao(X), \mc E^*(B)]=0$, $X\in\Sigma,$ $B\in\lh$, implied by Prop.\,\ref{prop:way} is then a restatement of the known result that compatibility of a channel with a sharp observable is equivalent to their commutativity \cite{Ozawa84, Pellonpaa13, Haapasalo14, Heinosaari14}.
\end{example}

We fix $\hi=\C^2=\ki$ for the rest of this section and let the following examples further demonstrate the power of Prop.\,\ref{prop:way}.

\begin{example}\label{ex:cunit} Consider a controlled unitary $U=  \id \otimes \ket 0 \bra 0  +  \sigma_z \otimes \ket 1 \bra 1$, where $\ket 0$ and $\ket 1$ are the eigenvectors of the Pauli spin-operator $\sigma_z$. Let $U$ serve as a measurement coupling in $\langle \C^2, \Zo, U, \phi \rangle$, where $\Zo:2^{\{+,-\}}\rightarrow\mc L(\C^2)$ is sharp and $\phi\in\C^2$ is a unit vector. We find that each $L=\text{diag}(a,b) \otimes \id$, $a,b \in \R$ is an additive conserved quantity with $L_2=0$ trivially commuting with any $\Zo$. The WAY-theorem then implies that a sharp observable $\Ao$ realized in $\langle \C^2, \Zo, U, \phi \rangle$ has to satisfy $\big[\Ao,\text{diag}(a,b) \big]=0$ for all $a,b\in \R$. In particular $[\Ao, \So_{\hat z}]=0$, or equivalently $\Ao$ and $\So_{\hat z}$ are jointly measurable. Indeed, it can be confirmed that $\langle \C^2, \Zo, U, \phi \rangle$ realises non-trivial sharp observables only when choosing $\Zo=\So_{\hat x}$ and $\phi = \frac{1}{\sqrt 2}( \ket 0 \pm \ket 1)$ (up to a global phase): the measured sharp observables are $\So_{\pm \hat z}$, respectively. With these choices the corresponding measurements are also repeatable.
\end{example}

\begin{example}\label{ex:genway} Let us fix the unitary 
\begin{eqnarray}
U=\frac{1}{\sqrt 2}\left(\begin{array}{cccc}
i & 0 & 0 & 1 \\
i & 0 & 0 &-1 \\
0 & i & 1 & 0 \\
0 & i &-1 & 0
\end{array}\right)
\end{eqnarray} and consider the measurement $\langle \C^2, \Zo, U, \phi \rangle$ where $\Zo:2^{\{+,-\}}\rightarrow\mc L(\C^2)$ is sharp and $\phi\in\C^2$ is a unit vector. We notice that the measured observable is sharp, that is $[U^*\big( \id \otimes \Zo \big) U, V_\phi V_\phi^*]=0$, for a non-trivial $\Zo$ if and only if $\Zo =  \So_{\pm \hat x}$, regardless of the choice of the unit vector $\phi \in \C^2$. With these choices the measurements are also repeatable. It may, however, be confirmed that the only additive conserved quantities w.r.t. $U$ are of the form $k \id \otimes \id$, $k\in\R$. Therefore, the measurement $\langle \C^2, \Zo, U, \phi \rangle$ is not subjected to non-trivial limitations in the traditional sense of the WAY-theorem for any choices of $\Zo$ and $\phi$. Although $[\Zo, \sigma_z]\neq 0$ in the cases $\Zo =  \So_{\pm \hat x}$, it can be confirmed that the additive quantity $L=\sigma_z \otimes \id + \id \otimes \sigma_z$ commutes with $U^*\big( \id \otimes \Zo\big) U$. Therefore, Prop.\,\ref{prop:way} implies that any sharp observable $\Ao$ realised in  $\langle \C^2, \Zo, U, \phi \rangle$ must satisfy $[\Ao, \sigma_z]=0$, which again is equivalent to $\Ao$ and $\So_{\hat z}$ being jointly measurable. 
\end{example}

\begin{example}\label{ex:swap}
Define a unitary $U=S \,\big(\id \otimes \ket 0 \bra 0 + \frac{1}{\sqrt 2} \left(\begin{array}{cc}
1 & 1 \\
1 & -1
\end{array}\right) \otimes \ket 1 \bra 1 \big),$ where $S$ is the SWAP gate 
\begin{eqnarray}\label{eq:swap}
S=\left(\begin{array}{cccc}
1 & 0 & 0 & 0 \\
0 & 0 & 1 & 0 \\
0 & 1 & 0 & 0 \\
0 & 0 & 0 & 1
\end{array}\right),
\end{eqnarray} and fix $\Zo=\So_{\hat z}$. Again, the only additive conserved quantities w.r.t. $U$ are of the form $k \id \otimes \id$, $k\in\R$ and furthermore any additive quantity $L$ satisfying $\big[U^*\big( \id \otimes \Zo\big) U, L\big]=0$ is trivial on the system side, $\id \otimes \text{diag}(a,b)$, $a,b\in\R$. We conclude that the additive quantities will not set any limitations for the measured observables via Prop.\,\ref{prop:way} in this case. However, it may be confirmed that there exist the two classes of multiplicative selfadjoint quantities $\left(\begin{array}{cc}
a & 0 \\
0 & b
\end{array}\right) \otimes \ket 0 \bra 0$ and $\left(\begin{array}{cc}
a & b \\
b & a
\end{array}\right) \otimes \ket 1 \bra 1$, $a,b\in\R$, commuting with $U^*\big( \id \otimes \Zo\big) U$.
With the above choices, the probe states $\ket 0$ and $\ket 1$ realize sharp observables $\So_{\hat z}$ and $\So_{\hat x}$, respectively, which clearly satisfy the corresponding commutation relations set by Prop.\,\ref{prop:way}.
\end{example}

One can also find cases of normal measurements which are not subjected to the WAY-type limitations even in the more general sense of Prop.\,\ref{prop:way}. Consider again the controlled unitary of Ex.\,\ref{ex:cunit}, $U_1= \id \otimes \ket 0 \bra 0  +  \sigma_z \otimes \ket 1 \bra 1$. It can be concluded that $U_1$ may be used to realize non-trivial sharp observables on $\hi$ by choosing the pointer $\Zo$ and the probe state $\phi$ appropriately, for instance $\Zo=\So_{\hat x}$ and $\phi=\frac{1}{\sqrt 2}\big(\ket 0 + \ket 1 \big)$. By interchanging the roles of the system and the apparatus, $\langle \hi, \Zo, U_1, \phi\rangle$ realizes a sharp observable $\Ao:\Sigma\rightarrow\mc P(\ki)$ also on $\ki$ defined via $\tr{\Ao (X) \, \xi} = \tr{\Zo(X) \otimes \id \, U_1\big( \ket \phi \bra \phi \otimes \xi \big) U_1^*}$, for all $\xi\in\sk$: this is immediately verified by noticing that $U_1$ commutes with the SWAP gate $S$ defined in Eq.\,\eqref{eq:swap}. Defining $U_2= \id \otimes \ket 0 \bra 0 +  \left(\begin{array}{cc}
0 & i \\
1 & 0
\end{array}\right) \otimes \ket 1 \bra 1$, however, does not serve as an interaction in a normal measurement $\langle \ki, \Zo, U_2, \phi \rangle$ of any non-trivial sharp observable on $\hi$ with any choices of $\Zo$ and $\phi$. On the other hand, $\langle \hi, \Zo, U_2, \phi \rangle$ may be used to realize non-trivial sharp observables on $\ki$, for example by choosing $\Zo=\So_{\hat z}$ and $\phi = \ket 0$. Finally, the unitary $U_3= \id \otimes \ket 0 \bra 0 +  \left(\begin{array}{cc}
1 & 0 \\
0 & i
\end{array}\right) \otimes \ket 1 \bra 1$ cannot be used to realize any non-trivial sharp observables on either $\hi$ or $\ki$. We summarise, that a coupling may be subjected to WAY-type limitations imposed by Prop.\,\ref{prop:way} both two-sidedly (e.g. $U_1$), only one-sidedly ($U_2$) or not be subjected to such limitations at all ($U_3$), simply due to its ability to serve as a measurement coupling for non-trivial sharp observables.

\section{Generalisation to POVMs}

As already discussed above, a way to circumvent the limitations set by WAY is to consider, instead of sharp, general (smeared) POVMs as measured observables. This deviation from the PVM-picture is often even reasonable from the physical point of view, as imperfections are present in all realistic measurement implementations. Therefore, since restricting one's attention only to sharp observables would make the WAY-theorem an unphysical curiosity, any step of developing the WAY-theorem in the broader context of {\it unsharp} observables is well justified from both theoretical and practical standpoints.

We will next elucidate that the limitations posed by Prop.\,\ref{prop:way} persist also at the level of POVMs. 
\begin{widetext}
\begin{proposition}\label{prop:wayofpovms1}
Let $\langle \ki, \Zo, U, \phi \rangle$ be a normal measurement of an observable $\Eo:\Sigma \to \lh$. Then, for all selfadjoint $L \in \mc L (\hi \otimes \ki)$, the inequality
\begin{eqnarray}\label{ineq:wayofpovms1}
\left|\left| \big[ \Eo(X), V_\phi^* L V_\phi \big] \right|\right| &\leq & 2 \left|\left| \big[ U^* \big( \id \otimes \Zo(X) \big) U, V_\phi V_\phi^* \big] \right|\right|  \, ||L|| + \left|\left| \big[ U^* \big( \id \otimes \Zo(X) \big) U, L \big] \right|\right| 
\end{eqnarray}
holds for all $X \in \Sigma$. 
\end{proposition}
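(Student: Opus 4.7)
The plan is to reduce the claim to an operator inequality for compressed operators on $\hi \otimes \ki$. Set $M := U^*(\id \otimes \Zo(X)) U$ and $P := V_\phi V_\phi^*$, so that $\Eo(X) = V_\phi^* M V_\phi$ by Eq.~\eqref{eq:mntobs}. Because $V_\phi$ is an isometry, $V_\phi^* V_\phi = \id$, $P V_\phi = V_\phi$ and $V_\phi^* P = V_\phi^*$. Using these identities one rewrites the target commutator as
\begin{equation*}
[\Eo(X),\, V_\phi^* L V_\phi] \;=\; V_\phi^* [PMP,\, PLP]\, V_\phi ,
\end{equation*}
so, since $V_\phi$ is an isometry, $\|V_\phi^* X V_\phi\| \leq \|X\|$ and it suffices to bound $\|[PMP, PLP]\|$.

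The key algebraic identity, obtained by inserting $\id = P + P^\perp$ between $M$ and $L$ on both sides of $P[M, L]P = PMLP - PLMP$, is
\begin{equation*}
[PMP,\, PLP] \;=\; P[M, L]P \;-\; PM\, P^\perp L P \;+\; PL\, P^\perp M P .
\end{equation*}
The off-diagonal blocks of $M$ with respect to $P$ are controlled by the sharpness defect: from $PMP^\perp = -P[M, P]$ and $P^\perp MP = [M, P]P$, each has operator norm at most $\|[M, P]\|$. Combined with the trivial estimates $\|PLP^\perp\|,\, \|P^\perp LP\| \leq \|L\|$ and $\|P\| = 1$, sub-multiplicativity and the triangle inequality then yield
\begin{equation*}
\|[PMP, PLP]\| \;\leq\; \|[M, L]\| \;+\; 2\,\|[M, P]\|\,\|L\| ,
\end{equation*}
which is the claimed inequality.

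The main obstacle I anticipate is picking the correct factorisation. A direct expansion of $[V_\phi^* M V_\phi,\, V_\phi^* L V_\phi]$ that inserts $P = V_\phi V_\phi^*$ wherever two $V_\phi$ factors meet produces a leftover term of the form $V_\phi^*[P, L] M V_\phi$, bounded by $\|[P, L]\|\,\|M\|$; this is not of the required form, since there is no a priori control of $[P, L]$ by the quantities appearing in the statement. The remedy is to first compress both $M$ and $L$ symmetrically to $PMP$ and $PLP$ and only afterwards resolve $\id = P + P^\perp$ between them; this places all the off-diagonal corrections on the $M$-factor, whose blocks are governed by the sharpness defect $\|[M, P]\|$.
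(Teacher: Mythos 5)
Your proof is correct and follows essentially the same route as the paper's: both reduce the commutator to an operator identity on $\hi\otimes\ki$ via $V_\phi^*P=V_\phi^*$, $PV_\phi=V_\phi$, and split it into one $[M,L]$ term plus two cross terms each bounded by $\|[M,P]\|\,\|L\|$. Your $P$/$P^\perp$ block bookkeeping is just a tidier operator-level packaging of the paper's expansion of $V_\phi^*(MPL-LPM)V_\phi$ into the $[M,P]L$ and $[M,LP]$ pieces (the paper then estimates via a supremum over unit vectors and Cauchy--Schwarz, which your submultiplicativity argument replaces).
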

\begin{proof}
We first notice that $i \big[ A, B \big]$ is a bounded selfadjoint operator for all bounded selfadjoint $A,B\in\lh$. Furthermore, $ V_\phi^* L V_\phi$ is a bounded selfadjoint operator on $\hi$ whenever $L$ is bounded and selfadjoint on $\hi\otimes \ki$. Therefore

\begin{eqnarray}
\left|\left| \big[ \Eo(X),  V_\phi^* L V_\phi \big] \right|\right|  &=& \sup_{|| \varphi ||\leq 1} \left| \ip{\varphi}{ \big[ V_\phi^* U^* \big( \id \otimes \Zo(X) \big) U V_\phi,  V_\phi^* L V_\phi \big] \varphi}\right| \nonumber \\
&=& \sup_{|| \varphi ||\leq 1}  \left| \ip{\varphi}{ V_\phi^* \left( U^* \big( \id \otimes \Zo(X) \big) U V_\phi V_\phi^* L - L V_\phi V_\phi^* U^* \big( \id \otimes \Zo(X) \big) U \right) V_\phi \varphi} \right| \nonumber \\
&=& \sup_{|| \varphi ||\leq 1}  \left| \ip{\varphi\otimes\phi}{ \big[U^* \big( \id \otimes \Zo(X) \big) U, V_\phi V_\phi^*\big] L \, \varphi\otimes\phi} + \ip{\varphi\otimes\phi}{ \big[U^* \big( \id \otimes \Zo(X) \big) U, L  V_\phi V_\phi^*\big] \varphi\otimes\phi} \right| \nonumber \\
&\leq & 2 \left|\left| \big[ U^* \big( \id \otimes \Zo(X) \big) U, V_\phi V_\phi^* \big] \right|\right|\, ||L|| + \left|\left| \big[ U^* \big( \id \otimes \Zo(X) \big) U, L \big] \right|\right|,
\end{eqnarray}
where we have used the fact that $V_\phi^* V_\phi V_\phi^* U^* \big( \id \otimes \Zo(X) \big) U L V_\phi = V_\phi^* U^* \big( \id \otimes \Zo(X) \big) U L V_\phi V_\phi^* V_\phi$, elementary commutation relations, the triangle inequality and the Cauchy-Schwarz inequality.
\end{proof}
\end{widetext}

In the right-hand-side of the Ineq.\,\eqref{ineq:wayofpovms1} one recognises the two terms: the first one related to the ``sharpness'' of the measured observable \cite{Lahti04} and the second one to the weak Yanase condition. Proposition \ref{prop:way} follows as a corollary exactly when these two terms vanish. 

There are also different WAY-type limitations to be found, as will be shown in the following. The proof is similar to that of the previous proposition and we will omit it.

\begin{proposition}\label{prop:wayofpovms2}
Let $\langle \ki, \Zo, U, \phi \rangle$ be an $\Eo$-measurement. Then, for all selfadjoint $L \in \lh$, the inequalities
\begin{eqnarray}\label{ineq:wayofpovms2}
\left|\left| \big[ \Eo(X), L\big] \right|\right| &\leq & \left|\left| \big[ U^* \big( \id \otimes \Zo(X) \big) U, L \otimes \id \big] \right|\right| \nonumber \\
&\leq & 2 \left|\left| [ U, L \otimes \id ] \right|\right|
\end{eqnarray}
hold for all $X \in \Sigma$.
\end{proposition}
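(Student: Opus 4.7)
The plan is to mirror the proof of Proposition \ref{prop:wayofpovms1}, reducing the system-side commutator $[\Eo(X), L]$ first to a commutator on $\hi \otimes \ki$ via the intertwining properties of the embedding isometry $V_\phi$, and then, for the second inequality, to a commutator involving $U$ alone via a Leibniz-rule expansion.

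For the first inequality, I would begin by observing that for any $L \in \lh$ one has $(L \otimes \id) V_\phi = V_\phi L$ (both sides send $\varphi$ to $(L\varphi) \otimes \phi$), and taking adjoints yields $V_\phi^*(L \otimes \id) = L V_\phi^*$. Substituting $\Eo(X) = V_\phi^* U^*(\id \otimes \Zo(X)) U V_\phi$ into $\Eo(X) L - L \Eo(X)$ and using these intertwinings produces the key identity
\begin{equation*}
[\Eo(X), L] \;=\; V_\phi^* \big[ U^*(\id \otimes \Zo(X)) U,\, L \otimes \id \big] V_\phi .
\end{equation*}
Since $V_\phi$ is an isometry, $\|V_\phi\| = \|V_\phi^*\| = 1$, and sub-multiplicativity of the operator norm immediately delivers the first inequality.

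For the second inequality, I would set $P = \id \otimes \Zo(X)$ and $M = L \otimes \id$ and exploit that $P$ and $M$ act nontrivially only on different tensor factors, so $[P, M] = 0$. Applying the Leibniz rule $[AB, C] = A[B, C] + [A, C] B$ twice gives
\begin{equation*}
[U^* P U, M] \;=\; U^* P [U, M] + [U^*, M]\, P U .
\end{equation*}
Since $U$ is unitary and $P$ is a projection (the pointer is sharp in a normal measurement), both $U^* P$ and $P U$ have norm at most $1$. Because $M$ is selfadjoint, $[U, M]^* = M U^* - U^* M = -[U^*, M]$, so $\|[U^*, M]\| = \|[U, M]\|$. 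The triangle inequality then produces the factor of $2$.

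I do not anticipate any real obstacle; the argument is essentially a cleaner variant of Proposition \ref{prop:wayofpovms1}, with the sharpness-correction term $\|[U^*(\id \otimes \Zo(X)) U, V_\phi V_\phi^*]\|$ dropping out because $L$ now lives on the system side only, letting the intertwining $(L \otimes \id) V_\phi = V_\phi L$ absorb all dependence on the probe state $\phi$. The one subtlety worth flagging is the identity $\|[U^*, M]\| = \|[U, M]\|$, which crucially relies on the selfadjointness of $M$.
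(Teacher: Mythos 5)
Your proof is correct and follows essentially the route the paper intends: the paper omits the proof of this proposition, referring to the argument for Proposition~\ref{prop:wayofpovms1}, and your first inequality is exactly that argument streamlined into the operator identity $[\Eo(X),L]=V_\phi^*\big[U^*(\id\otimes\Zo(X))U,\,L\otimes\id\big]V_\phi$ via the intertwining $(L\otimes\id)V_\phi=V_\phi L$, with the sharpness term indeed dropping out. The Leibniz-rule expansion for the second inequality, using $[\id\otimes\Zo(X),L\otimes\id]=0$ and $\|[U^*,L\otimes\id]\|=\|[U,L\otimes\id]\|$ for selfadjoint $L$, is also sound.
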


Propositions \ref{prop:wayofpovms1} and \ref{prop:wayofpovms2} become particularly powerful in the cases where their right-hand-sides vanish. Although the two results have apparent similarity, the limitations set by them can be very different. We present the following examples for clarification.

\begin{figure*}[t!]
\begin{minipage}[h]{0.95\textwidth}
\includegraphics[width=0.95\textwidth]{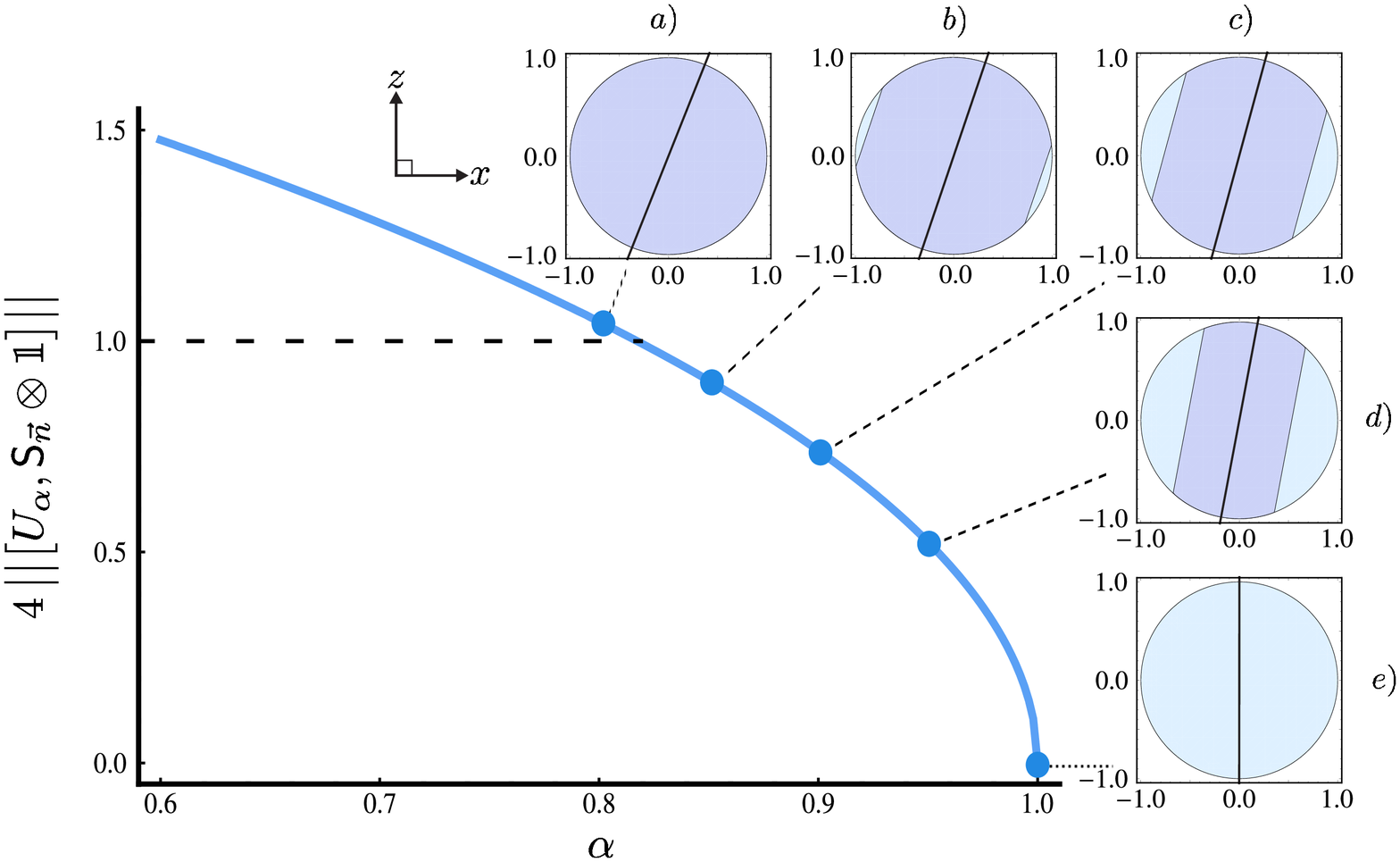}
\caption{The value of the quantity $4 \left|\left| \big[U_\alpha, \So_{\vec n} \otimes \id \big] \right|\right|$ minimised over $\vec n$, where $U_\alpha$ is as defined in Ex.\,\ref{ex:povms4} and $\So_{\vec n}(\pm) = \frac{1}{2}\big( \id \pm \vec n \cdot \vec \sigma \big)$, $||\vec n||=1,$ is plotted in terms parameter $\alpha$ ranging from $0.6$ to $1$. Limitations for the measurability of the observables $\So_{\vec m}(\pm)=\frac{1}{2}\big( \id \pm \vec m \cdot \vec \sigma \big)$, $||\vec m||\leq1,$ set by the relation $2||\big[\So_{\vec m},\So_{\vec n} \big] ||= || \vec m \times \vec n || \leq 4 \left|\left| \big[U_\alpha, \So_{\vec n} \otimes \id \big]  \right|\right|$, are present whenever $4 \left|\left| \big[U_\alpha, \So_{\vec n} \otimes \id \big]  \right|\right|<1$. These limitations have been illustrated by mapping the cross-sections in $xz$-plane of the effects that are at least in principle realisable with $U_\alpha$ for five different values of $\alpha$: a) $\alpha = 0.8$, b) $\alpha = 0.85$,  c) $\alpha = 0.9$,  d) $\alpha = 0.95$ and e) $\alpha = 1$. The total set of effects can attained by rotating the cross-sections a)-e) about the corresponding symmetry axes that have been depicted as black lines.   }\label{fig:wayforpovms}
\end{minipage}
\end{figure*}

\begin{example} Let us denote by $\bar \Ao$ the selfadjoint operator defined as the first moment of the PVM $\Ao:\mc B(\R) \to \mc P (L^2(\R))$: $\bar \Ao = \int_\R x \, \Ao(dx)$. Assume that one intends to measure a sharp observable $\Ao$ by coupling it to the momentum observable $\Po$ of the apparatus via the unitary interaction $U=\text{exp}(i \lambda \bar \Ao \otimes \bar \Po)$, where the parameter $\lambda\in\R$ quantifies the strength of the measurement interaction. One natural choice for the pointer in order to monitor the shifts generated by $U$ is the position observable $\Qo$ of the apparatus. The resulting {\it standard model} of measurement $\langle L^2(\R), \Qo, U, \phi \rangle$ is one of the most widely used forms of normal quantum measurements \cite{Busch95, QTM}. The actual observable measured in this process is $\Eo(X) = \int_\R p^\Qo_\phi (X-\lambda x)\, \Ao(d x)$. As such, $\Eo$ is a {\it smeared} unsharp version of the intended sharp observable $\Ao$. 

The observables $\Eo$ and $\Ao$ are clearly jointly measurable: $[\Eo(X), \Ao(Y)] = 0,$ for all $X,Y \in \mc B(\R)$. In fact since $\big[U, \Ao \otimes \id\big]=0$, Prop.\,\ref{prop:wayofpovms2} implies that all the observables realisable with this coupling are jointly measurable with $\Ao$, regardless of the choice of the pointer observable and the probe state. This same conclusion cannot be generally drawn from Prop.\,\ref{prop:wayofpovms1}. Namely, since the measured observable $\Eo$ in a standard model can be sharp only if $\Ao$ is discrete \cite{Busch95}, the ``sharpness'' term in Ineq.\,\eqref{ineq:wayofpovms1} is generally non-vanishing.
\end{example}

For the rest of the examples of this section we will again fix $\hi=\C^2=\ki$.

\begin{example}\label{ex:povm1}
Recall the coupling $U_3= \id \otimes \ket 0 \bra 0 +  \left(\begin{array}{cc}
1 & 0 \\
0 & i
\end{array}\right) \otimes \ket 1 \bra 1$ introduced above as a measurement coupling between one-qubit system and one-qubit apparatus. It may be confirmed that $[U, L \otimes \id]=0$ for any $L=\left(\begin{array}{cc}
a & 0 \\
0 & b
\end{array}\right),$ $a,b\in\R$, and Prop.\,\ref{prop:wayofpovms2} implies that $[\Eo, \So_{\hat z}]=0$, that is all the measured observables realisable with this coupling are always jointly measurable with $\So_{\hat z}$. However, as mentioned before, it is not possible to use $U_3$ as a coupling in a measurement of any non-trivial sharp observable. Therefore, the right-hand-side of Ineq.\,\eqref{ineq:wayofpovms1} is always non-vanishing and Prop.\,\ref{prop:wayofpovms1} fails to reproduce the same conclusion.
\end{example}

\begin{example}
Consider again the coupling $U=S \,\big(\id \otimes \ket 0 \bra 0 + \frac{1}{\sqrt 2} \left(\begin{array}{cc}
1 & 1 \\
1 & -1
\end{array}\right) \otimes \ket 1 \bra 1 \big)$ and fix $\Zo=\So_{\hat z}$. It was pointed out in Ex.\,\ref{ex:swap} that, for a selfadjoint $L\in\mc L (\hi)$, both the relations $\big[U, L\otimes \id\big] =0$ and $\big[ U^* \big( \id \otimes \Zo \big) U, L \otimes \id \big]=0$ imply that $L$ is trivial. Accordingly, either of the right-hand-sides in Ineq.\,\eqref{ineq:wayofpovms2} vanishes if and only if $L$ is trivial. On the other hand, the limitations set by the two classes of multiplicative quantities, $\left(\begin{array}{cc} 
a & 0 \\
0 & b
\end{array}\right) \otimes \ket 0 \bra 0$ and $\left(\begin{array}{cc}
a & b \\
b & a
\end{array}\right) \otimes \ket 1 \bra 1$, $a,b\in\R$, that were pointed out in Ex.\,\ref{ex:swap} are captured by Prop.\,\ref{prop:wayofpovms1}: the right-hand-side of Ineq.\,\eqref{ineq:wayofpovms1} vanishes for both of these classes.
\end{example}

\begin{example}\label{ex:povms4} Finally, let us consider the measurability of  qubit observables $\So_{\vec{m}}(\pm) = \frac{1}{2}\big( \id \pm \vec{m}\cdot \vec{\sigma} \big)$, $\vec m \in \R^3$,  $||\vec{m}||\leq 1$, with a unitary coupling $U_\alpha=
\left(\begin{array}{cc}
\alpha & \sqrt{1-\alpha^2} \\
\sqrt{1-\alpha^2} & -\alpha
\end{array}\right) \otimes \ket 0 \bra 0 + \left(\begin{array}{cc}
1 & 0 \\
0 & i
\end{array}\right) \otimes \ket 1 \bra 1 $, where $0 \leq \alpha \leq 1$. For any sharp $\So_{\vec{n}} (\pm) = \frac{1}{2}\big( \id \pm \vec{n}\cdot \vec{\sigma} \big),$ $\vec n \in \R^3$, $||\vec{n}||=1$, the quantity $2 \left|\left| \big[ \So_{\vec m}, \So_{\vec n}] \right|\right| = || \vec m \times \vec n ||$ constitutes a measure of incompability of the qubit observables $\So_{\vec m}$ and $\So_{\vec n}$ vanishing for compatible observables and attaining its maximum value $1$ for the maximally incompatible ones, i.e. sharp spin measurements in perpendicular directions. Proposition \ref{prop:wayofpovms2} implies that $ || \vec m \times \vec n || \leq 4 \left|\left| \big [U_\alpha, \So_{\vec{n}}\otimes \id \big] \right|\right|$. In other words, any unit vector $\vec{n}\in\R^3$ satisfying $4 \left|\left| \big [U_\alpha, \So_{\vec{n}}\otimes \id \big] \right|\right|<1$ implies a non-trivial constraint for the set of the realisable observables $\So_{\vec{m}}$.
 
For $\alpha=1$ the unitary $U_\alpha$ commutes with any quantity of the form
$ \left(\begin{array}{cc}
a & 0 \\
0 & b
\end{array}\right)\otimes \id,$ $a,b\in\R$, that is the observables measurable are exactly those compatible with $\So_{\hat z}$. On the other hand, for $\alpha <1$ the commutator $\big [U_\alpha, \So_{\vec{n}}\otimes \id \big]$ is non-vanishing for all unit vectors $\vec n\in\R^3$, but limitations may nevertheless be found for the realisability of the observables, as illustrated in Fig.\,\ref{fig:wayforpovms}. In the figure, the minimum value of $4 \left|\left| \big [U_\alpha, \So_{\vec{n}}\otimes \id \big] \right|\right|$ optimised over $\vec n$ is plotted against the parameter $\alpha$. In addition, the $xz$--cross-sections of those effects on the Bloch sphere satisfying $|| \vec m \times \vec n ||\leq 4 \left|\left| \big[U_\alpha, \So_{\vec{n}}\otimes \id \big] \right|\right|$ for the minimising $\vec n$, that is the effects that are in principle realisable with $U_\alpha$, are presented for five specific choices of $\alpha$: $0.8, 0.85, 0.9, 0.95$ and $1$. The full set of effects can be attained by rotating these cross-sections about the corresponding symmetry axes denoted in black.
\end{example}

\section{Application to quantum programming}

Looking at Eq.\,\eqref{eq:mntobs} one observes that altering the initial state of the probe may lead to measurements of different observables. We call this mapping from probe states to observables {\it quantum programming}, the fixed triplet $\langle \ki, \Zo, U\rangle$ a programmable quantum {\it multimeter} (also known as a programmable processor) and the variables of the multimeter {\it programming states}.

The obvious advantage of a quantum multimeter over a fixed set-up is that one does not have to build multiple measurement apparatuses in order to implement different observables. It is known, however, that no multimeter is universal, that is one cannot construct a multimeter $\langle \ki, \Zo, U\rangle$ that surjectively maps $\sk$ to the full set of observables on $\hi$. This follows from the fact that all unequal sharp observables demand mutually orthogonal programming states and, consequently, the number of programmable sharp observables is bounded by the dimension of the multimeter \cite{Dariano05, Tukiainen15, Tukiainen16}. As the proof of this result is quite concise, we will present it here for the readers' convenience.

\begin{proposition}\label{prop:ortho} Let $\langle \ki, \Zo, U\rangle$ be a multimeter realising sharp observables $\Ao_i:\Sigma\to\ph$ with programming states $\phi_i$, $i=1,2$, respectively. If $\Ao_1 \neq \Ao_2$ then $\ip {\phi_1} {\phi_2} =0$.
\end{proposition}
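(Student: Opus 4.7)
The plan is to exploit the explicit form \eqref{eq:mntobs} of the measured observables together with the sharpness characterisation $[U^*(\id\otimes \Zo(X))U, V_\phi V_\phi^*]=0$ that was recalled right after that equation. Write $M(X):=U^*(\id\otimes \Zo(X))U$, so that $\Ao_i(X)=V_{\phi_i}^* M(X) V_{\phi_i}$ and $M(X)$ is itself a projection (since $\Zo$ is sharp and $U$ is unitary).

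The first step is to upgrade the sharpness condition $[M(X), V_{\phi_i} V_{\phi_i}^*]=0$ into the intertwining identity
\begin{equation}
M(X) V_{\phi_i} \;=\; V_{\phi_i} \Ao_i(X), \qquad i=1,2.
\end{equation}
This should follow by observing that $V_{\phi_i}^* V_{\phi_i}=\id$, so $V_{\phi_i}V_{\phi_i}^*$ is the projection onto the range of $V_{\phi_i}$, and writing $M(X) V_{\phi_i} = V_{\phi_i} V_{\phi_i}^* M(X) V_{\phi_i} = V_{\phi_i} \Ao_i(X)$.

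The second step is the decisive trick: evaluate $V_{\phi_2}^* M(X) V_{\phi_1}$ in two different ways using the intertwining identity. Applying it on the $i=1$ side gives
\begin{equation}
V_{\phi_2}^* M(X) V_{\phi_1} \;=\; V_{\phi_2}^* V_{\phi_1}\, \Ao_1(X) \;=\; \overline{\ip{\phi_2}{\phi_1}}\, \Ao_1(X),
\end{equation}
since $V_{\phi_2}^* V_{\phi_1} = \ip{\phi_2}{\phi_1}\id$. Taking adjoints (using that $M(X)$ and $\Ao_i(X)$ are selfadjoint) and then applying the intertwining identity on the $i=2$ side yields
\begin{equation}
V_{\phi_2}^* M(X) V_{\phi_1} \;=\; \big(V_{\phi_1}^* M(X) V_{\phi_2}\big)^* \;=\; \big(\ip{\phi_1}{\phi_2} \Ao_2(X)\big)^* \;=\; \overline{\ip{\phi_1}{\phi_2}}\, \Ao_2(X).
\end{equation}

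Comparing the two expressions gives $\overline{\ip{\phi_2}{\phi_1}}(\Ao_1(X)-\Ao_2(X))=0$ for every $X\in\Sigma$. If $\ip{\phi_1}{\phi_2}\neq 0$ this forces $\Ao_1=\Ao_2$, contradicting the hypothesis; hence $\ip{\phi_1}{\phi_2}=0$. I do not anticipate a real obstacle here—the only point requiring a moment of care is the first step, namely extracting the intertwining identity cleanly from the sharpness condition, but it is essentially a one-line computation once one recognises $V_{\phi_i}V_{\phi_i}^*$ as the range projection of $V_{\phi_i}$.
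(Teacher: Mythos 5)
Your proof is correct and follows essentially the same route as the paper: both arguments evaluate $V_{\phi_2}^*\, U^*\big(\id\otimes\Zo(X)\big)U\, V_{\phi_1}$ in two ways using the sharpness commutation $[U^*\big(\id\otimes\Zo(X)\big)U, V_{\phi_i}V_{\phi_i}^*]=0$ together with $V_{\phi_i}^*V_{\phi_i}=\id$, which is exactly your intertwining identity in disguise. The only blemish is a harmless conjugation slip in your first display ($V_{\phi_2}^*V_{\phi_1}\,\Ao_1(X)$ should equal $\ip{\phi_2}{\phi_1}\,\Ao_1(X)$, not its conjugate), which does not affect the conclusion since a scalar vanishes if and only if its conjugate does.
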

\begin{proof}
Let $X \in \Sigma$ be such that $\Ao_1(X) \neq \Ao_2(X)$. Since $ V_{\phi_2}^* V_{\phi_1} = \ip  {\phi_2} {\phi_1} $, we have
\begin{eqnarray}
\ip  {\phi_2} {\phi_1} \Ao_1(X) &=& V_{\phi_2}^* V_{\phi_1} V_{\phi_1}^* U^*\big( \id \otimes \Zo(X)\big) U V_{\phi_1} \nonumber \\
&=& V_{\phi_2}^*  U^*\big( \id \otimes \Zo(X)\big) U V_{\phi_2} V_{\phi_2}^* V_{\phi_1} \nonumber \\
&=& \ip {\phi_2} {\phi_1} \Ao_2(X) \, .
\end{eqnarray}
\end{proof}

\begin{example} The multimeter $\langle \C^2, \So_{\hat z}, U \rangle$ in Ex.\,\ref{ex:swap} can be programmed to realise the sharp observables $\So_{\hat z}$ and $\So_{\hat x}$ with programming states $\ket 0$ and $\ket 1$, respectively. Clearly, $\ip{0}{1}=0$.
\end{example}

It has been anticipated in Ref.\,\cite{Kimura08} that the WAY-theorem will set further restrictions for the programmable quantum multimeters. It is the purpose of this section to validate this expectation. To this end, consider a programmable multimeter $\langle \ki, \Zo, U \rangle$ that realises a sharp observable $\Ao_1:\Sigma \to \ph$ with a programming state $\phi_1\in\ki$, $||\phi_1||=1$. Proposition \ref{prop:wayofpovms1} then simplifies to 
\begin{eqnarray}\label{eq:wayprog}
\left|\left| \big[ \Ao_1(X), V_{\phi_1}^* L V_{\phi_1} \big] \right|\right| \leq  \left|\left| \big[ U^* \big( \id \otimes \Zo(X) \big) U, L \big] \right|\right| \, . \quad
\end{eqnarray}
Let $\Eo_2:\Sigma \to \lh$ be any other observable realisable with the multimeter $\langle \ki, \Zo, U\rangle$ and programming state $\phi_2\in\ki$, $||\phi_2||=1$, and fix an unitary operator $G$ on $\ki$ such that $G \phi_1 = \phi_2$. Since $V_{\phi_2} = \id \otimes G \, V_{\phi_1}$, we have $\Eo_2(Y) = V_{\phi_1}^* L(Y) V_{\phi_1} $ for a family of selfadjoint operators
\begin{eqnarray}
L(Y) := U_G^* \big( \id \otimes \Zo(Y) \big) U_G , \quad Y\in \Sigma \, ,
\end{eqnarray}
where $U_G = U \, \id \otimes G$.
Inserting $L(Y)$ in Eq.\,\eqref{eq:wayprog} results to the following proposition.

\begin{proposition}\label{prop:wayprog}
Let $\langle \ki, \Zo, U \rangle$ be a multimeter realising a sharp observable $\Ao_1:\Sigma\to\ph$ and an observable $\Eo_2:\Sigma \to \lh$ with programming states $\phi_1$ and $\phi_2$, respectively. For any unitary operator $G$ on $\ki$ satisfying $G \phi_1 = \phi_2$, the relation
\begin{eqnarray}\label{eq:wayprog2}
&&\left|\left| \big[ \Ao_1(X), \Eo_2(Y) \big] \right|\right| \nonumber \\
&\leq&  \left|\left| \big[ U^* \big( \id \otimes \Zo(X) \big) U, U_G^* \big( \id \otimes \Zo(Y) \big) U_G \big] \right|\right| 
\end{eqnarray}
holds for all $X,Y \in \Sigma$. 
\end{proposition}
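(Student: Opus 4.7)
The plan is to derive Proposition \ref{prop:wayprog} as a direct specialisation of the inequality~\eqref{eq:wayprog} (that is, of Proposition \ref{prop:wayofpovms1} when the first observable is sharp) applied to the specific selfadjoint operator $L=L(Y)$. To do this I need two ingredients: first, that $L(Y)$ is a bounded selfadjoint operator on $\hi\otimes\ki$, and second, the crucial identity
\begin{eqnarray}
\Eo_2(Y) \;=\; V_{\phi_1}^{\,*}\, L(Y)\, V_{\phi_1},
\end{eqnarray}
which reinterprets the second (generally unsharp) programmed observable as the compression through $V_{\phi_1}$ of an auxiliary lift living on the full tensor product, so that the output of \eqref{eq:wayprog} is precisely $\|[\Ao_1(X),\Eo_2(Y)]\|$.

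First I would verify the selfadjointness of $L(Y)$: since $\Zo(Y)$ is a projection, $\id\otimes \Zo(Y)$ is selfadjoint, and conjugation by the unitary $U_G=U(\id\otimes G)$ preserves selfadjointness and boundedness. This puts $L(Y)$ into the class of operators for which the inequality \eqref{eq:wayprog} is stated.

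Next I would establish the compression identity. The hypothesis $G\phi_1=\phi_2$ gives, for every $\varphi\in\hi$,
\begin{eqnarray}
V_{\phi_2}\varphi \;=\; \varphi\otimes\phi_2 \;=\; \varphi\otimes G\phi_1 \;=\; (\id\otimes G)\,V_{\phi_1}\varphi,
\end{eqnarray}
so $V_{\phi_2}=(\id\otimes G)V_{\phi_1}$ and hence $V_{\phi_2}^{\,*}=V_{\phi_1}^{\,*}(\id\otimes G^*)$. Substituting into the defining formula \eqref{eq:mntobs} for $\Eo_2$ yields
\begin{eqnarray}
\Eo_2(Y) &=& V_{\phi_1}^{\,*}(\id\otimes G^*)\,U^*\bigl(\id\otimes \Zo(Y)\bigr)U\,(\id\otimes G)V_{\phi_1}\nonumber\\
&=& V_{\phi_1}^{\,*}\,U_G^{\,*}\bigl(\id\otimes \Zo(Y)\bigr)U_G\,V_{\phi_1}\;=\; V_{\phi_1}^{\,*}\,L(Y)\,V_{\phi_1},
\end{eqnarray}
as required. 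This is the only step with any real content, and I expect it to be the main (though still routine) point of the argument; nothing genuinely hard is happening, since the paper has already done the analytic work in Proposition \ref{prop:wayofpovms1}.

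Finally I would plug $L=L(Y)$ into inequality \eqref{eq:wayprog} (valid because $\Ao_1$ is sharp, which kills the ``sharpness'' term of \eqref{ineq:wayofpovms1}). The left-hand side becomes $\|[\Ao_1(X),\Eo_2(Y)]\|$ by the compression identity just established, while the right-hand side is exactly $\|[U^*(\id\otimes\Zo(X))U,\,U_G^{\,*}(\id\otimes\Zo(Y))U_G]\|$ by the definition of $L(Y)$. This yields the claimed inequality for all $X,Y\in\Sigma$, completing the proof.
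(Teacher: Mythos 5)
Your proposal is correct and follows essentially the same route as the paper: specialise Ineq.\,\eqref{ineq:wayofpovms1} to the sharp observable $\Ao_1$ (so the sharpness term vanishes by the cited criterion $[U^*(\id\otimes\Zo(X))U,V_{\phi_1}V_{\phi_1}^*]=0$), use $V_{\phi_2}=(\id\otimes G)V_{\phi_1}$ to write $\Eo_2(Y)=V_{\phi_1}^*L(Y)V_{\phi_1}$ with $L(Y)=U_G^*(\id\otimes\Zo(Y))U_G$, and insert $L=L(Y)$. You merely spell out the compression identity in slightly more detail than the paper does.
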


Proposition \ref{prop:wayprog} confirms the existence of WAY-type limitations in quantum programming of observables. It is noteworthy, that the two multimeters $\langle \ki, \Zo, U\rangle$ and $\langle \ki, \Zo, U_G\rangle$ differ only by a local unitary transformation. Accordingly, they are {\it equivalent} in sense they both program exactly the same set of observables, only with different programming states \cite{Hillery10}. In this formalism Prop.\,\ref{prop:wayprog} relates the amount of (in)compatibility of the evolved pointers of the two multimeters with the (in)compatibility of the programmed observables. Such a relation can be useful in designing optimal multimeters, e.g. for purposes of measurement based quantum computing.

\section{Summary and discussion}
In summary, an approach to the theorem of Wigner, Araki and Yanase (WAY) was introduced which expresses the measurability limitations in the language of quantum incompatibility. Importantly, this formalism reveals a more intuitive and far more generally valid mathematical structure behind the WAY-theorem. In addition, two quantitative generalisations of WAY-type measurability restrictions to POVMs were presented. Finally, we demonstrated the potential of our results in applications of quantum programming.

Even though this analysis was focusing on the WAY-limitations of quantum observables, we wish to point out that the formalism can be straightforwardly extended also for general quantum devices, e.g. quantum channels or instruments. For example, the similarity between unitary channels and sharp observables noted in Ref.\,\cite{Tukiainen15} would allow one to find similar limitations to those proved here for quantum (unitary) channels. Although the details are beyond the scope of this paper and will be left as a topic for separate investigation, this approach could potentially lead to WAY-type limitations on quantum logic gates and computation that are more straightforward and general than those reported in Refs.\,\cite{Ozawa022, Karasawa07, Karasawa09}.

\section*{Acknowledgements}
The author would like to thank Prof. Paul Busch, Dr. Leon Loveridge, and his supervisors Dr. Teiko Heinosaari and Dr. Juha-Pekka Pellonp\"a\"a for useful comments and discussion and acknowledge financial support from the University of Turku Graduate School (UTUGS).


\end{document}